\begin{document}
\title{Modularization, Composition, and Hierarchization of Petri Nets with {\normalfont \textsc{Heraklit}}}
\titlerunning{Modularization, Composition, and Hierarchization of Petri Nets}
%
%

\author{Peter Fettke\inst{1,2}\orcidID{0000-0002-0624-4431} \and
Wolfgang Reisig\inst{3}\orcidID{0000-0002-7026-2810}}
\authorrunning{P. Fettke, W. Reisig}
%

\institute{German Research Center for Artificial Intelligence (DFKI), Saarbr\"ucken, Germany \\
\email{peter.fettke@dfki.de}\\ \and
Saarland University, Saarbr\"ucken, Germany \\ \and
Humboldt-Universität zu Berlin, Berlin, Germany \\  
\email{reisig@informatik.hu-berlin.de}}

\maketitle              
\begin{abstract}
It is known for decades that computer-based systems cannot be understood without a concept of modularization and decomposition. We suggest a universal, expressive, intuitively attractive composition operator for Petri nets, combined with a refinement concept and an algebraic representation of nets and their composition. Case studies show exemplarily, how large systems can be composed from tiny net snippets. In the future, more field studies are needed to better understand the consequences of the proposed ideas in the real world.

\keywords{systems composition \and data modelling \and behaviour modelling \and composition calculus \and algebraic specification \and Petri nets \and Systems Mining}
\end{abstract}

\section*{Introduction}

Small Petri net models can be very attractive because their graphical representation is intuitively appealing. However, when models grow into nets with many places and transitions, their graphical representations quickly become unhandsome. Some kind of structure, abstraction, and composition of smaller nets are required. During recent decades, manifold proposals for composition and abstraction of Petri nets have been published. 

In this paper, we suggest a concept of Petri net \textit{modules}, their composition and their abstraction. This concept is particularly flexible and expressive. Notably,

\begin{enumerate}

\item an interface of a module may consist of a mixture of places, transitions, and arrows;

\item modules can be defined on the level of operational behavior as well as on abstract, hierarchical levels; 

\item several instances of a net $N$ can be composed: just write $N \bullet N$;

\item modules of different levels of abstraction can be composed;

\item composition defines a monoid on Petri net modules.
 
\end{enumerate}

Sec.~\ref{sec:informal_survey} provides an informal survey on the key aspects and properties of modules and their composition. Sec.~\ref{sec:formal_framework} provides the formal framework. This section may be skipped at first reading as the following case studies and examples in Sec.~\ref{sec:five_phils} and Sec.~\ref{sec:production_line} are intuitively easy to grasp.  The two case studies in these sections show best practice of using the module concept to specify big Petri net models by starting from small snippets of Petri nets, and composing them appropriately. The paper closes with a discussion of related work and some concluding remarks.

\section{An Informal Survey\label{sec:informal_survey}}

It is intuitively most attractive to start a theory or a conceptualization of the notion of \textit{discrete system} with the idea that a system is composed from subsystems. We follow this observation with a concept called \textit{modules}. To capture the composition of modules, it is furthermore a nearby idea that a module has \textit{elements}, some of which belong to the \textit{inner part} of the module, and the others to its \textit{interface}. Interface elements are \textit{labeled}, and the composition of two modules A and B results in a module $A \bullet B$, where equally labeled elements of the interfaces of A and of B have been merged, yielding an inner element of $A \bullet B$.  Unfortunately, this composition operator $\bullet$ is \textit{not associative}, i.e., for three such modules $A$, $B$ and $C$, in general, the module $(A \bullet B) \bullet C$  differs from  $A \bullet (B \bullet C)$. (As an example, assume equally labeled elements of all three interfaces of $A$, $B$ and $C$). But associativity is mandatory for cases where \textit{many} modules are to be composed: associative composition allows to cut a whole system into two sub-systems at any arbitrary point. If composition is not associative then a decomposition of a system is only possible inside the scope of a bracketing. 

We nevertheless stick to this concept of composition, but apply it in a specific manner, based on the observation that the interface of a module frequently decomposes quite naturally into two sub-interfaces. For example, a function has inputs and outputs, a specification has assumptions and guarantees, a manufacturer has material and products, a retailer has a buy side and a sell side, a firm has debts and liabilities, an income statement consists of profit and loss, a service has providers and requesters, an information system has push- and a pull operations, an agent has sensors and actors, a digital computer has input and output devices.  We abstract all this away and assume modules with \textit{two} interfaces, called \textit{left} and \textit{right}. 

For two modules $A$ and $B$, the composed module $A \bullet B$ is gained by merging each element of the right interface of $A$ with an equally labeled element of the left interface of $B$, yielding an inner element of $A \bullet B$. The remaining elements of the interfaces of $A$ and of $B$ go to the interfaces of $A \bullet B$. Most important, this kind of composition is associative, i.e. for any such modules $A$, $B$ and $C$ it is always the case that they can be composed, and that $(A \bullet B) \bullet C = A \bullet (B \bullet C)$. This implies the important observation that the composition $A_1 \bullet \dots \bullet A_n$ of \textit{many} modules $A_1, \dots, A_n$ can be written without brackets. 

In addition to composition, a second operator on modules, the \textit{closure} $A^c$, of a module $A$, yields cyclic structures.

This module concept comes with a useful notion of \textit{abstraction}: a module can be represented in any context just by its name. In particular, large specifications, formed $A_1 \bullet \dots \bullet A_n$, can be given short names and be employed later on by just calling this name. Vice versa, each module name can be expanded by its “refined” module. With this kind of abstraction, it is possible to easily define big and hierarchically structured modules. In technical terms, the abstraction of a module is again a module. Composition of abstract modules shows the architecture of a big model.

\section{The Formal Framework\label{sec:formal_framework}}

The reader may skip this section at first reading; in fact, the examples in the forthcoming sections should be self-explanatory.

\subsection{The Notion of Module\label{sec:2.1}}

In a systematic setting, we start with the assumption of a finite set $\Sigma$ of \textit{labels}. This in turn yields the notion of \textit{interfaces over} $\Sigma$. Technically, an interface is just a finite, labeled set, where each element has an index (a number). A \textit{module} is a graph, where two subsets of nodes serve as interfaces:

\begin{definition}[alphabet and interface]
Let $\Sigma$ be a finite set of symbols or labels, called an {\normalfont alphabet}. An {\normalfont interface over} $\Sigma$ is a finite set $R$, with each element of $R$ carrying a label of $\Sigma$, as well as an integer index, such that $n$ equally labeled elements of $R$ are indexed $1, \dots, n$. 
\end{definition}

Hence, all elements are indexed by $1$ if and only if no two elements are labeled alike. The elements are numerated $1, \dots, n$, if and only if $R$ has $n$ elements, all with the same label. 

Fig.~\ref{fig:interfaces} shows technical examples of two interfaces. An element $x$ may belong to more than one interface. An element always retains its label, but may have a different index in each interface.  For example, in Fig.~\ref{fig:interfaces}, the element $b$ has label $\beta$. Its index is $1$ in $R$ and index $2$ in $S$. The element $e$ with label $\alpha$ has index $3$ in $R$ and index $2$ in $S$.

\begin{figure}[!tb]
\centering
{\includegraphics[trim={0cm 0cm 0cm 0cm},clip,scale=.35]{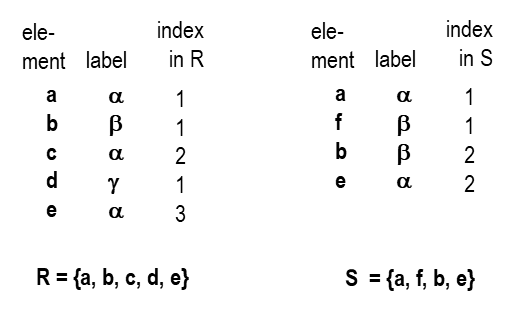}}
\caption{technical example for two interfaces $R$ and $S$, sharing the elements $a$, $b$, and $e$.}
\label{fig:interfaces}
\end{figure}

Notational convention for interfaces: We frequently represent an interface $R$ only by a column of its labels. This hides the individual identity of the elements.  The index of the elements of $R$ increase up-down. This way, the index of an element is given by the position of its label in the label column.

As usual, a directed graph $G$ is as a tuple $(V,E)$ of nodes $V$ and edges $E$; we conceive an undirected graph as a graph with edges in both directions. A module is a graph with two interfaces:

\begin{definition}[module]
Let $\Sigma$ be an alphabet, let $G = (V,E)$ be a graph, and let $^\ast G, G^\ast \subseteq V$ be two interfaces over $\Sigma$. Then G together with $^\ast G$ and $G ^\ast$ is a {\normalfont module over} $\Sigma$. 
\end{definition}

Notice that $^\ast G$ and $G ^\ast$ are not necessarily disjoint.

\textbf{Notation}. With the above definition, $^\ast G$ and $G ^\ast$ are the \textit{left and right interface of the module} $G$. Nodes not in an interface belong to the \textit{interior of} $G$, written $I_G$. {\ldq G\rdq} is the name of the module.

Graphical conventions guarantee unique representations of modules: 

\textbf{Graphical representation}. A module $G$ is graphically represented as usual for graphs: each node is represented as a dot and each arc as an arrow. For an undirected graph, arrows are replaced by lines. To represent $G$ with its interfaces, the inner of $G$ is surrounded by a box, with the elements of the left and the right interface on the left and the right margin, respectively. Fig.~\ref{fig:modules} shows examples: the inner elements of $A$ are the nodes $a$, $b$, and $c$; the inner elements of $B$ are $d$ and $e$. The interfaces follow the above notational conventions: The identity of the interface elements is hidden, only their label is presented. Furthermore, the indices of interface elements increase up-down. For example, $A ^\ast$ has one node lableled $\delta$ and two nodes labeled $\gamma$. a The $\gamma$-labeled node linked to the inner node $a$ has index $1$. The $\gamma$-labeled node linked to the inner nodes $a$ and $b$ has index $2$. The aim of lucid representations sometimes gives rise to further graphical conventions. They will be explained locally.

\begin{figure}[!tb]
\centering
{\includegraphics[trim={0cm 0cm 0cm 0cm},clip,scale=.18]{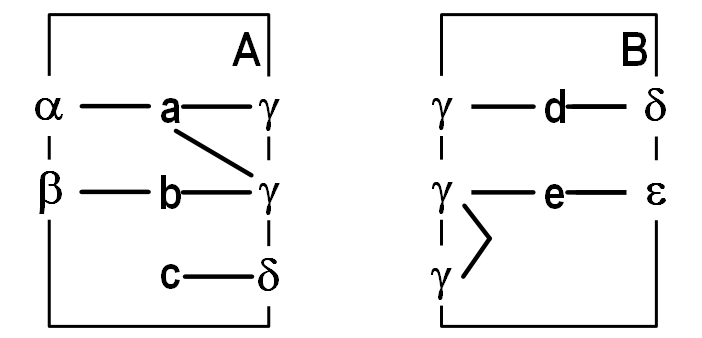}}
\caption{Two technical examples for modules. Interface elements are represented by their labels. Indices of equal labeled elements of an interface increase up-down.}
\label{fig:modules}
\end{figure}

\subsection{Composition of Modules\label{sec:2.2}}

A technical notion prepares the definition of the composition of modules: as already informally described in Sec.~\ref{sec:informal_survey}, composition $A \bullet B$ of two modules $A$ and $B$ essentially depends on the interfaces $A ^\ast$ and $^\ast B$. So, we need a definition that relates elements of two interfaces: 

\begin{definition}[harmonic pairs]
Let $R$ and $S$ be two disjoint interfaces. Two elements $r \in R$ and $s \in S$ are \textit{harmonic partners of $R$ and $S$} if and only if $r$ and $s$ have the same label, $l$, and the index $n$ of $r$ in $R$ coincides with the index of $s$ in $S$. Then $\{r, s\}$ is a {\normalfont harmonic pair with label $l$ and index $n$ of $R$ and $S$}.
\end{definition}

Fig.~\ref{fig:harmonic_pairs} outlines examples of harmonic pairs of the two interfaces of Fig.~\ref{fig:interfaces}.

\begin{figure}[!tb]
\centering
{\includegraphics[trim={0cm 0cm 0cm 0cm},clip,scale=.35]{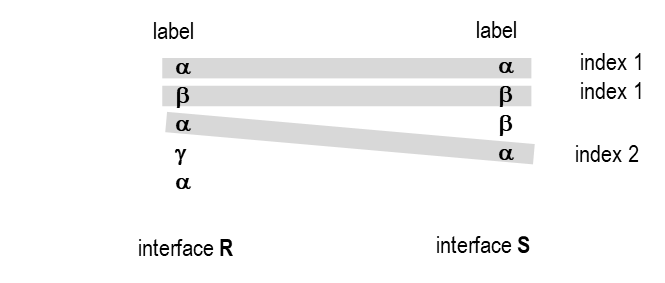}}
\caption{The thee harmonic pairs of the two interfaces $R$ and $S$ of Fig.~\ref{fig:interfaces}}.
\label{fig:harmonic_pairs}
\end{figure}

We are now prepared to define the composition of modules: 

\begin{definition}[$A \bullet B$]
Let $A$ and $B$ be two modules. For each node $x$ of $A$ or of $B$, let $x' = \{x, y\}$ if $\{x, y\}$ is a harmonic pair of $A ^\ast$ and $^\ast B$, and let $x' = x$ if no such pair exists. Then the module $A \bullet B$ is defined as follows:

\begin{enumerate}

\item The nodes of $A \bullet B$ are all $x'$ such that $x$ is a node of $A$ or of $B$.

\item The edges of $A \bullet B$ are all $(x', z')$, such that $(x, z)$ is an edge of $A$ or of $B$.

\item The left interface $^\ast(A \bullet B)$:

\begin{itemize}
\item For each $x \in {^\ast A}$ with label $l$ and index $n$ in $^\ast A$ holds:  $x' \in {^\ast (A \bullet B)}$ with label $l$ and index $n$ in $^\ast (A \bullet B)$. 
\item Let $x \in {^\ast B}$ without a harmonic partner in $A ^\ast$. Let $l$ be the label of $x$ and let $n$ be the index of $x$ in $^\ast B$. Let $p$ be the maximal index of $l$-labeled elements of $^\ast A$, and let $m$ be the number of $l$-labeled harmonic pairs of $A ^\ast$ and $^\ast B$. Then $x \in {^\ast (A \bullet B)}$ with label $l$ and index $p+n-m$. 
\end{itemize}

\item  The right interface $(A \bullet B) ^\ast$: 

\begin{itemize}
\item For each $x \in B ^\ast$ with label $l$ and index $n$ in $B ^\ast$ holds:  $x' \in (A \bullet B) ^\ast$ with label $l$ and index $n$ in $(A \bullet B) ^\ast$. 
\item Let $x \in A ^\ast$ without a harmonic partner in $^\ast B$. Let $l$ be the label of $x$ and let $n$ be the index of $x$ in $A ^\ast$. Let let $p$ be the maximal index of $l$-labeled elements of $^\ast A$, and let $m$ be the number of $l$-labeled harmonic pairs of $A ^\ast$ and $^\ast B$. Then $x \in (A \bullet B) ^\ast$ with label $l$ and index $p+n-m$. 
\end{itemize}

\end{enumerate}

\end{definition}

Hence, $x$ is an inner node of $A \bullet B$ if and only if $x$ is an inner node of $A$, or an inner node of $B$, or a harmonic pair of $A ^\ast$ and $^\ast B$.

Fig.~\ref{fig:composition} shows a technical example. For clarification, each interface node of the modules in this figure is equipped by its index. The inner nodes of $A \bullet B$ are the inner nodes $a, \dots, e$ of $A$ and $B$, as well as the two harmonic pairs of $A ^\ast$ and $^\ast B$, both labeled $\gamma$, with indices $1$ and $2$. In $A \bullet B$, those harmonic pairs are represented by $\{\gamma, \gamma\}_1$ and $\{\gamma, \gamma\}_2$, respectively.

\begin{figure}[!tb]
\centering
{\includegraphics[trim={0cm 0cm 0cm 0cm},clip,scale=.18]{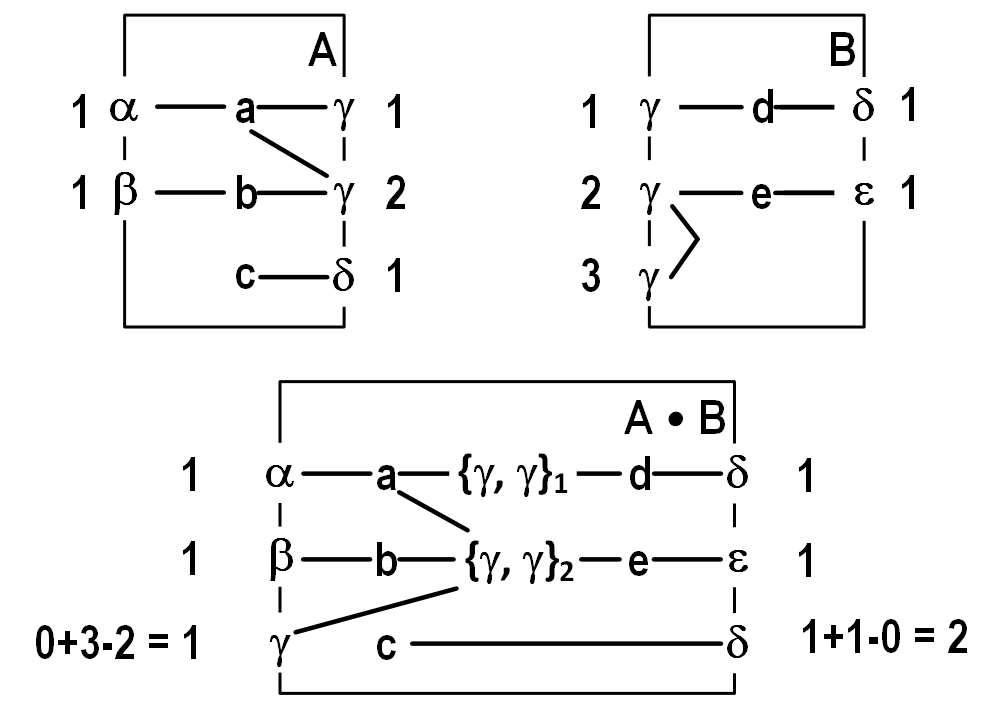}}
\caption{Composition of modules. Each interface element is equipped by its index.}
\label{fig:composition}
\end{figure}

Most important, the above composition operator is associative:

\begin{theorem}[associativity]
Let $A$, $B$, and $C$ be three modules over an alphabet. Then it holds that $(A \bullet B) \bullet C = A \bullet (B \bullet C)$.
\end{theorem}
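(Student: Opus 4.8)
\medskip
\noindent\textbf{Proof sketch.} I would prove this in three stages: first normalize the set-theoretic encoding of composite nodes, then identify the node set of a triple composition with a quotient by an equivalence relation whose definition does not favour either bracketing, and finally read off that the edges and the two interfaces agree. For the normalization, note that the construction of $A\bullet B$ only ever \emph{identifies} pairs of nodes and invents no further structure; but under iterated composition a merged node is encoded as a \emph{nested} set (say $\{\{x,y\},z\}$ on one side versus $\{x,\{y,z\}\}$ on the other), so the two sides of the claimed equation are literally equal only after one agrees to flatten such encodings --- equivalently, after reading ``$=$'' as ``canonically isomorphic'', which is all that is ever used. I would therefore represent each node of a composite by the set of \emph{atoms} (original nodes of the primitive modules) that were glued together to form it, and write $V_A$, $V_B$, $V_C$ for the pairwise disjoint atom sets.

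For the node set, put on $V_A \uplus V_B$ the equivalence relation $\sim_{AB}$ generated by the harmonic pairs of $A^\ast$ and ${}^\ast B$ (each such pair glues one atom of $V_A$ to one atom of $V_B$), define $\sim_{BC}$ on $V_B \uplus V_C$ analogously, and let $\approx$ be the equivalence on $V_A \uplus V_B \uplus V_C$ generated by $\sim_{AB} \cup \sim_{BC}$. I claim the node set of $(A\bullet B)\bullet C$ --- and, by the mirror argument, that of $A\bullet(B\bullet C)$ --- is exactly $(V_A \uplus V_B \uplus V_C)/{\approx}$. The point to verify is that a harmonic pair of $(A\bullet B)^\ast$ and ${}^\ast C$ is either (i) a harmonic pair of $B^\ast$ and ${}^\ast C$, possibly amalgamated with atoms of $V_A$ --- because by item~4 an element of $B^\ast$ keeps its label and index in $(A\bullet B)^\ast$ --- or (ii) a pair consisting of a surviving $A^\ast$-atom $\alpha$ (one with no harmonic partner in ${}^\ast B$) and a ${}^\ast C$-atom $c$; and a short computation with the index counts in items~3 and~4 shows that in case~(ii) $c$ has no harmonic partner in $B^\ast$ and, after composition with $B$, becomes a harmonic partner of $\alpha$ in $A^\ast$ and ${}^\ast(B\bullet C)$ at the matching index, so that $\alpha$ and $c$ are glued in the right bracketing as well. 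Hence both compositions realize exactly the identifications of $\approx$ and induce the same quotient map. This bookkeeping is the \textbf{main obstacle}: one has to show, label by label, that the re-indexing rule of items~3 and~4 (shift a surviving interface element past the block of elements preceding it) is \emph{additive} under iterated composition, so that ``being a consecutive harmonic pair'' is insensitive to the order in which the compositions are performed. A cleaner route, which I would also consider, defines an $n$-ary composition $A_1 \bullet \dots \bullet A_n$ that glues all consecutive harmonic pairs at once and shows every binary bracketing equals it; this isolates the index bookkeeping in a single lemma and yields associativity as the case $n = 3$.

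Once the quotient maps coincide, item~2 gives that both edge sets equal $\{\, ([x]_{\approx}, [z]_{\approx}) : (x,z) \in E_A \cup E_B \cup E_C \,\}$, so they agree. For the interfaces I would give a description of ${}^\ast(A\bullet B\bullet C)$ that refers to no bracketing --- namely the $\approx$-classes meeting ${}^\ast A$, followed by those meeting ${}^\ast B$ that contain no $A^\ast$-atom consumed on the left, followed by those meeting ${}^\ast C$ that are not consumed anywhere along the right interface of $A\bullet B$, each carrying its common label and an index equal to its position in this concatenation; the interface $(A\bullet B\bullet C)^\ast$ is described symmetrically from the right. That both bracketings of items~3 and~4 compute exactly this list of label/index pairs is again the additivity of the re-indexing rule already established. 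Collecting the agreements on nodes, edges, and the left and right interfaces yields $(A\bullet B)\bullet C = A\bullet(B\bullet C)$.
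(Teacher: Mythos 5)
The paper does not actually contain a proof of this theorem --- it only states it and defers to the cited reference \cite{reisig2019associative} --- so there is no in-paper argument to compare yours against. Judged on its own terms, your plan is sound and identifies the two genuine issues correctly. First, the observation that the theorem cannot hold as a literal set equality under the paper's encoding (a node merged three times is $\{\{x,y\},z\}$ in one bracketing and $\{x,\{y,z\}\}$ in the other) is right, and re-encoding composite nodes as sets of atoms, so that both bracketings produce the quotient of $V_A \uplus V_B \uplus V_C$ by the equivalence generated by all harmonic pairs, is the standard and correct repair. Second, your case analysis of harmonic pairs of $(A\bullet B)^\ast$ and ${}^\ast C$ is accurate, and the ``additivity'' you defer does check out: fixing a label $l$ and writing $a, b, b', c$ for the numbers of $l$-labeled elements of $A^\ast$, ${}^\ast B$, $B^\ast$, ${}^\ast C$, both bracketings glue $A^\ast_n$ to ${}^\ast B_n$ for $n \le \min(a,b)$, glue $B^\ast_i$ to ${}^\ast C_i$ for $i \le \min(b',c)$, and glue $A^\ast_n$ to ${}^\ast C_{\,b'+n-b}$ exactly for $b < n \le b + \min(a-b,\, c-b')$; the surviving interface elements then receive identical labels and indices on both sides because each interface of the triple composition is, per label, the concatenation ``own interface, then unconsumed leftovers of the next module, then unconsumed leftovers of the one after'', independent of bracketing. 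What you have written is therefore a correct skeleton rather than a complete proof: the one substantive obligation you leave open is precisely this per-label index arithmetic, and to turn the sketch into a proof you would need to carry it out (your alternative route via an $n$-ary composition and a single re-indexing lemma is a clean way to package it). One small caveat worth making explicit: the harmonic-pair definition requires the two interfaces to be disjoint, so for expressions such as $N \bullet N$ you must work with disjoint copies, and your atom sets $V_A, V_B, V_C$ must be taken pairwise disjoint even when the modules are instances of the same net.
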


Proof of this theorem is given in \cite{reisig2019associative}.

\subsection{The Closure Operator\label{sec:2.3}}

The following case studies show that the above composition operator is powerful enough to systematically construct big net models from tiny net snippets. Nevertheless, the operator does not cover the merge of harmonic partners of $A ^\ast$ and $^\ast A$. We express this by another, unary closure operator.

\begin{definition}[$A^c$]
Let $A$ be a module. For each node $x$ of $A$, let $x' = \{x,y\}$, if {x,y} is a harmonic pair of $A ^\ast$ and $^\ast A$, and let $x' = x$ if no such pair exists. Then the module $A^c$ is defined as follows:

\begin{enumerate}

\item The nodes of $A^c$ are all $x'$ such that $x$ is a node of $A$.

\item The edges of $A^c$ are all $(x', z')$, such that $(x, z)$ is an edge of $A$.

\item The left interface $^\ast (A^c)$: 
Let $x \in {^\ast A}$ without a harmonic partner in $A ^\ast$. Let $l$ be the label of $x$ and let $n$ be the index of $x$ in $^\ast A$. Let $m$ be the number of $l$-labeled harmonic pairs of $^\ast A$ and $A ^\ast$. Then $x \in {^\ast (A^c)}$ with label $l$ and index $n-m$. 

\item The right interface $(A^c) ^\ast$: 
Let $x \in A ^\ast$ without a harmonic partner in $^\ast A$. Let $l$ be the label of $x$ and let $n$ be the index of $x$ in $A ^\ast$. Let $m$ be the number of $l$-labeled harmonic pairs of $A ^\ast$ and $^\ast A$. Then $x \in (A^c) ^\ast$ with label $l$ and index $n-m$. 

\end{enumerate}

\end{definition}

Fig.~\ref{fig:closure} shows a technical example: each interface element carries its index. The inner elements of $A^c$ are the inner elements $a, \dots, g$ of $A$ as well as the two harmonic pairs of $A ^\ast$ and $^\ast A$, labeled $\alpha$ and $\beta$, both with index $1$. In $A^c$, they are represented by $\{\alpha, \alpha\}$ and $\{\beta, \beta\}$.

\begin{figure}[!tb]
\centering
{\includegraphics[trim={0cm 0cm 0cm 0cm},clip,scale=.18]{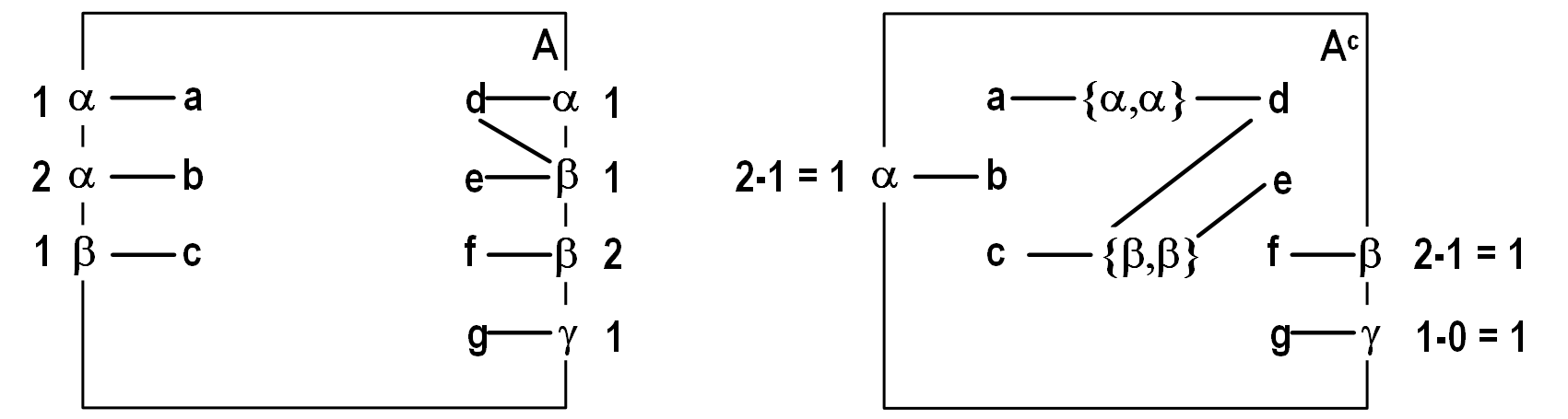}}
\caption{technical example: a module and its closure}
\label{fig:closure}
\end{figure}

The closure operator is \textit{idempotent}: for each module $A$ holds: $(A^c)^c  =  A^c$. This observation is based on the fact that no label occurs in both the left as well as the right interface of $A^c$.

\subsection{Refinement and Hierarchies\label{sec:2.4}}

Composition $A \bullet B$ of two net modules $A$ and $B$ returns a net module from which $A$ and $B$ can in general not be retained. Sometimes, however, it is useful to keep information about modules that were intermediately constructed during the design process of a big net module. This is achieved by abstract modules. The nodes of an \textit{abstract module} are names of other modules. 

In technical terms, an abstract module is composed from \textit{atomic} modules.  An atomic module has \textit{exactly one} inner node. This node is linked to each interface node. Each module $A$ has its abstract version, $abstr(A)$, which is an atomic module that inherits the left and the right interface from $A$, but abstracts all inner details of $A$ away, replacing them by the name of $A$. 

Fig.~\ref{fig:abstract_versions} shows the abstract versions $abstr(A)$ and $abstr(B)$ of the modules $A$ and $B$ of Fig.~\ref{fig:modules}, together with an obvious shorthand representation. The composition $abstr(A) \bullet abstr(B)$ of the abstract versions of $A$ and $B$ is not atomic, as Fig.~\ref{fig:abstract_and_composition} shows.

\begin{figure}[!tb]
\centering
{\includegraphics[trim={0cm 0cm 0cm 0cm},clip,scale=.17]{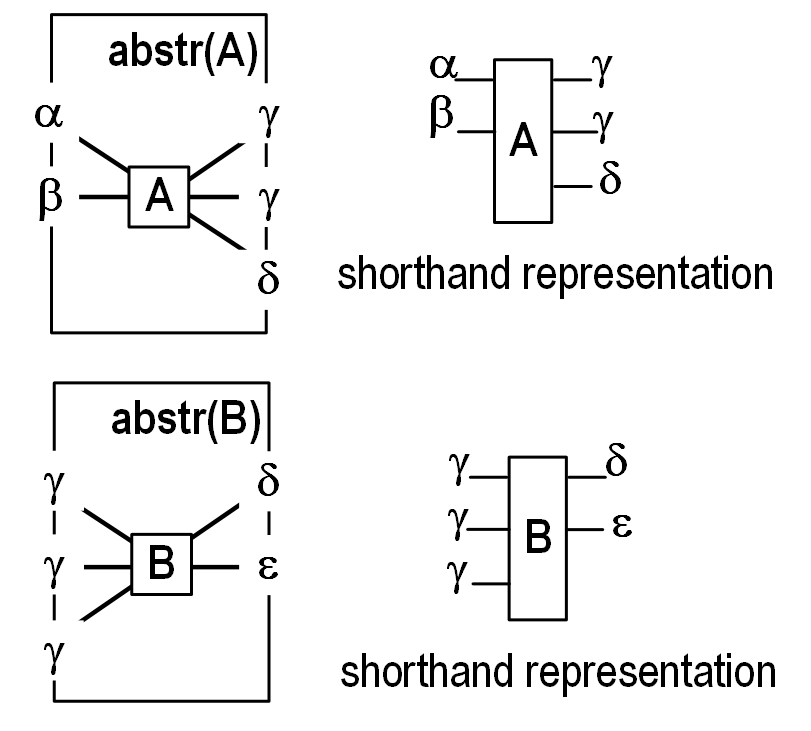}}
\caption{the abstract versions of the modules of Fig.~\ref{fig:modules}}
\label{fig:abstract_versions}
\end{figure}

\begin{figure}[!tb]
\centering
{\includegraphics[trim={0cm 0cm 0cm 0cm},clip,scale=.17]{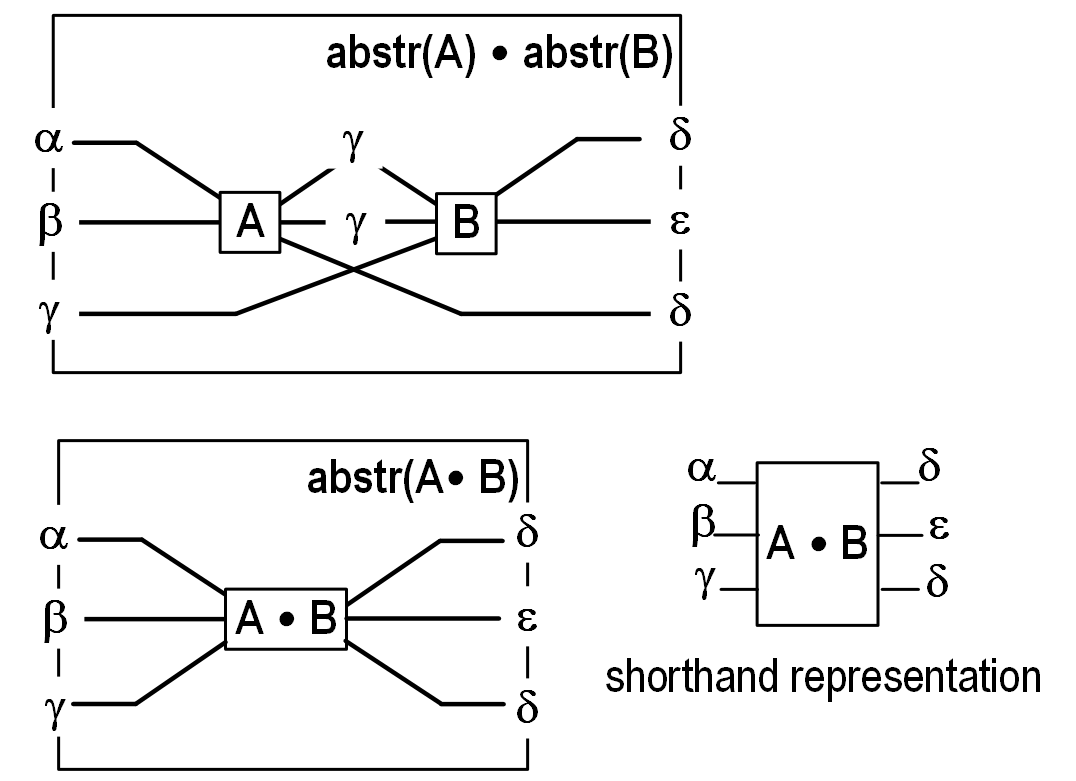}}
\caption{composition and abstraction}
\label{fig:abstract_and_composition}
\end{figure}

The operator $abstr$ has some interesting algebraic properties: For all modules $A$ and $B$ holds, up to re-naming atomic modules:
\begin{itemize}
\item $abstr(abstr(A)) = abstr(A)$;
\item $abstr(A \bullet B) = abstr( abstr(A) \bullet abstr(B))$. 
\end{itemize}

Frequently, a module $A$ is composed from given and well-known modules $A_1, \dots, A_n$. Then for $A = A_1 \bullet \dots \bullet A_n$, the abstract module $B = abstr(A_1) \bullet \dots \bullet abstr(A_n)$ is the \textit{seam} of $A$.

\subsection{The Module Monoid\label{sec:2.5}}

An extreme example of a module is the \textit{empty} module, $E$. It has no nodes at all, and, of course, its interfaces are empty. For any module $A$ then holds according to the above definitions: $E \bullet A = A \bullet E = A$. 

For a given alphabet $\Sigma$, the set $M_\Sigma$ of all modules over $\Sigma$ forms an algebraic structure $\underline{M_\Sigma} \coloneqq (M_\Sigma, \bullet, E)$, called a \textit{monoid}. This means that composition is \textit{total}, i.e. any two modules $A$ and $B$ in $M_\Sigma$ can be composed, and the composition $A \bullet B$ is again a module in $M_\Sigma$. Furthermore, composition is associative, as discussed at the end of Sec.~\ref{sec:2.2}. This provides a solid basis for intuitive and technically simple design of big models. 

The monoid $\underline{M_\Sigma}$ resembles the monoid $(\Sigma ^\ast, \bullet, \epsilon)$ of words over $\Sigma$ (where the composition symbol is usually skipped). Consequently, the theory of formal languages is applicable to modules. 

The module monoid $\underline{M_\Sigma}$ has interesting submonoids, viz. subsets such that composition of modules in the subset yields again a module in this subset. Most important, Petri nets form such a submonoid:  a Petri net $N = (P, T; F)$ defines a graph $(P \cup T, F)$: each place and each transition of $N$ is a node, and the flow relation of $N$ contributes the arcs of the graph. The alphabet $\Sigma$ is partitioned into two sets $\Sigma _P$ and $\Sigma _T$, such that the label of each place is taken from $\Sigma _P$, and the label of each transition from $\Sigma _T$. This guarantees that the composition of Petri nets is a Petri net again.

Likewise important is the submonoid of abstractions, as defined above. The $\Sigma$-labeled occurrence nets are a submonoid of the Petri net monoid. 

As mentioned in Sec.~\ref{sec:2.1} already, the two interfaces $^\ast G$ and $G ^\ast$ of a module $G$ are not necessarily disjoint. In the extreme case of $^\ast G = G ^\ast$, the module $G$ is called \textit{monolithic}. The monolithic modules also constitute a submonoid of $\underline{M_\Sigma}$. This observation will be used in the next section.

\subsection{Completeness of Composition and Closure\label{sec:2.6}} 

Here we show that every net structure can be generated by help of composition from \textit{transition atoms}: A transition atom $A$ is an atomic module as defined in Sec.~\ref{sec:2.4}, with a transition $t$ as the only inner element, and its joined pre- and postsets ${^\bullet t} \cup t ^\bullet$  as its left as well its right interface; for the sake of simplicity we assume nets without isolated elements.

\begin{definition}
Let $N = (P, T; F)$ be a net.

\begin{enumerate}

\item For $t \in T$, let the module $[t]$ be the graph $(V, E)$ with $V = {^\bullet t} \cup t ^\bullet \cup \{t\}$ and $E = (^\bullet t \times {t}) \cup ({t} \times t ^\bullet)$. The interfaces $^\ast [t]$ and $[t] ^\ast$ of $[t]$ are identical, and $^\ast [t] = [t] ^\ast = {^\bullet t} \cup t ^\bullet$.  Each interface place is labeled by its identity. The module [t] is called the {\normalfont transition atom} of $t$ and is monolithic per construction.

\item Let the module $[N]$ be the graph $(V, E)$ with $V = P \cup T$ and $E = F$. Let $^\ast [N] = [N] ^\ast = P$.

\end{enumerate}

\end{definition}

\begin{lemma} Let $N = (P, T; F)$ be a net with $T = {t_1, \dots, t_n}$.  Let $A_0 = E$, and for $i = 1, \dots, n$ let $A_i = A_{i-1} \bullet [t_i]$ . Then $[N] = A_n$.

\end{lemma}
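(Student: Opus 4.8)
The plan is to prove the lemma by induction on $i$, establishing the slightly stronger claim that for every $i$ with $0 \le i \le n$ the module $A_i$ coincides with $[N_i]$, where $N_i = (P_i, T_i; F_i)$ is the subnet of $N$ spanned by the first $i$ transitions: $T_i = \{t_1, \dots, t_i\}$, $P_i = \bigcup_{j=1}^{i}({}^\bullet t_j \cup t_j^\bullet)$, and $F_i = F \cap ((P_i \times T_i) \cup (T_i \times P_i))$. Since $N$ is assumed to have no isolated elements, $P_n = P$, $T_n = T$ and $F_n = F$, so the instance $i = n$ is exactly the assertion $[N] = A_n$. Unwinding the definition of $[\,\cdot\,]$, the claim $A_i = [N_i]$ says: the graph of $A_i$ is $(P_i \cup T_i, F_i)$, and $A_i$ is monolithic with ${}^\ast A_i = A_i^{\ast} = P_i$, each interface place labelled by its own identity and carrying index $1$.

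The base case $i = 0$ is immediate, since $A_0 = E$ is the empty module and $N_0$ is the empty net. For the induction step I would compute $A_i = A_{i-1} \bullet [t_i]$ from the induction hypothesis on $A_{i-1}$ together with the description of the transition atom $[t_i]$: its graph is $({}^\bullet t_i \cup t_i^\bullet \cup \{t_i\},\, ({}^\bullet t_i \times \{t_i\}) \cup (\{t_i\} \times t_i^\bullet))$, and it is monolithic with both interfaces equal to ${}^\bullet t_i \cup t_i^\bullet$, each place labelled by its identity with index $1$. Because all interface labels are place identities and all indices are $1$, the harmonic pairs of $A_{i-1}^{\ast} = P_{i-1}$ and ${}^\ast[t_i] = {}^\bullet t_i \cup t_i^\bullet$ are precisely the pairs formed by the two copies of a place $p \in P_{i-1} \cap ({}^\bullet t_i \cup t_i^\bullet)$. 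Feeding this into the definition of $A \bullet B$: after identifying each such merged pair with $p$, the node set of $A_i$ is $P_{i-1} \cup ({}^\bullet t_i \cup t_i^\bullet) \cup T_{i-1} \cup \{t_i\} = P_i \cup T_i$, and the edge set is $F_{i-1} \cup ({}^\bullet t_i \times \{t_i\}) \cup (\{t_i\} \times t_i^\bullet)$, which equals $F_i$ because every flow arc of $N$ is incident with exactly one transition, so an arc lies in $F_i$ iff it lies in $F_{i-1}$ or is incident with $t_i$.

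For the interfaces I would substitute the same data into items 3 and 4 of the definition of $A \bullet B$. An element $x \in {}^\ast A_{i-1} = P_{i-1}$ stays in ${}^\ast A_i$ with index $1$. An element $x \in {}^\ast[t_i]$ without a harmonic partner in $A_{i-1}^{\ast}$ is exactly a place $p \in ({}^\bullet t_i \cup t_i^\bullet) \setminus P_{i-1}$, and then the offset $p + n - m$ appearing in the definition evaluates to $0 + 1 - 0 = 1$, since no $p$-labelled element occurs in the left interface of $A_{i-1}$ and there is no $p$-labelled harmonic pair; hence $p$ enters ${}^\ast A_i$ with index $1$. Therefore ${}^\ast A_i = P_i$ with trivial indices, and the symmetric computation for the right interfaces gives $A_i^{\ast} = P_i$; in particular $A_i$ is again monolithic, so $A_i = [N_i]$, completing the induction.

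The bulk of this is routine substitution; the one genuinely delicate point is the disjointness hypothesis built into the definitions of harmonic pairs and of $\bullet$: a place $p$ of $N$ may occur literally in both $A_{i-1}$ and $[t_i]$, whereas the composition operator is phrased for modules with disjoint interfaces that are merged via labels rather than via identity. I would dispose of this by the standard convention that the interface elements of the two operands are taken as formally disjoint copies whose origin is recorded only through their (place-identity) labels, so that "harmonic" faithfully means "the same place of $N$". Once this is pinned down, the crux is the observation that monolithic modules with all interface indices equal to $1$ are closed under $\bullet$ — which is precisely what forces every index offset in the definition to collapse to $1$ — and the rest of the proof is direct calculation.
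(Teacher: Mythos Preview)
Your proposal is correct and follows essentially the same inductive argument as the paper: both define the subnet $N_i$ spanned by the first $i$ transitions and show $A_i = [N_i]$ by induction, with the paper simply writing $[N] = [N'] \bullet [t_n]$ ``by construction of $N'$'' where you carry out the node, edge, and interface computations explicitly. Your treatment of the disjointness convention and of the index bookkeeping is more careful than the paper's, which omits both.
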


\begin{proof}
by induction on $|T|$: 

The case $|T| = 0$ is trivial. 

Assume the Lemma holds for $n-1$. Let $T' = \{t_1, \dots , t_{n-1}\}$; $P' = {^\bullet t_1} \cup t_1 ^\bullet \cup \dots \cup {^\bullet t_{n-1}} \cup t_{n-1} ^\bullet$; $F' =  F \cap ((P' \cup T') \times (P' \cup T')$, and  let $N' =  (P',T',F')$.  
Then $[N] = [N'] \bullet [t_n]$, by construction of $N'$. 
Then $[N] = A_{n-1} \bullet [t_n]$, because $[N'] = A_{n-1}$, by inductive assumption.
Then $[N] = A_n$,  by definition of $A_n$.

\end{proof}
	 
Notice that not all graphs can be constructed from finitely many transition atoms. Examples are cycles with uneven number of nodes.

\section{Case Study: The Five Philosophers System\label{sec:five_phils}} 

We assume the reader be aware of Dijkstra’s paradigm of five philosophers, sitting round a table \cite{dijkstra1971hierarchical}. Neighboring philosophers share a fork. Hence, each philosopher has a left and a right fork, and each fork has a left and a right user. To take a meal, a philosopher requires both his left and his right fork. 

We start with four quite simple snippets of nets, and embed each of them into a module. From these modules we construct larger modules and finally a Petri net model of the five philosophers system, just by help of the two operators for composition and closure. We achieve this in two different ways. The first one focusses on the involved forks, and composes the resulting module from modules for forks (and their users). The second way focusses on the philosophers and composes the same module from modules for philosophers (and their forks).

\subsection{The Modules of Forks and Philosophers\label{sec:3.1}}
We start with a model for a very basic piece of behavior of forks: a fork is taken by its left user and returned later on (after the user has finished his meal). Fig.~\ref{fig:four_snippets_a} shows this behavior as a simple sequence of two transition with an intermediate place, arranged in a module, named \textit{left use}: the transitions labeled \textit{take} and \textit{return} are in the left interface, the place labeled \textit{available} is in the right interface. Choice of the two interfaces will be motivated below.

\begin{figure}[!tb]
\centering
\subcaptionbox{left use of a fork\label{fig:four_snippets_a}}
{\includegraphics[trim={0cm 0cm 0cm 0cm},clip,scale=.21]{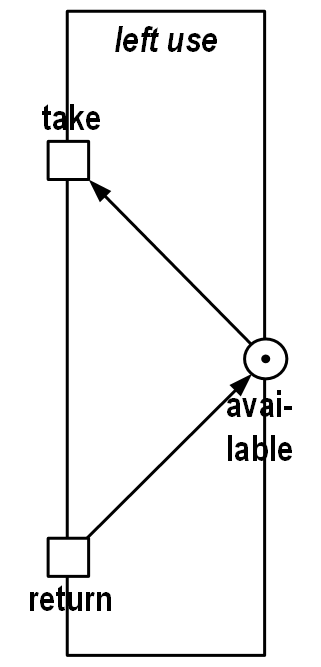}}\hspace{0.5cm}
\subcaptionbox{right use of a fork\label{fig:four_snippets_b}}
{\includegraphics[trim={0cm 0cm 0cm 0cm},clip,scale=.21]{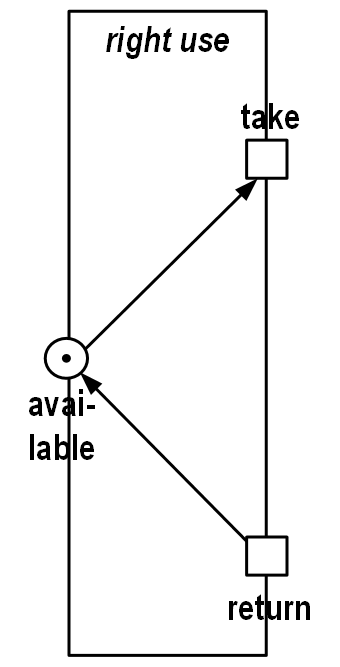}}\hspace{0.5cm}
\subcaptionbox{thinking philosopher\label{fig:four_snippets_c}}
{\includegraphics[trim={0cm 0cm 0cm 0cm},clip,scale=.21]{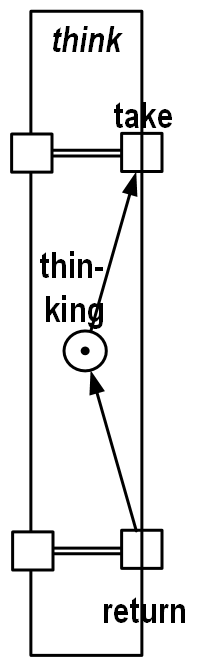}}\hspace{0.5cm}
\subcaptionbox{eating philosopher\label{fig:four_snippets_d}}
{\includegraphics[trim={0cm 0cm 0cm 0cm},clip,scale=0.21]{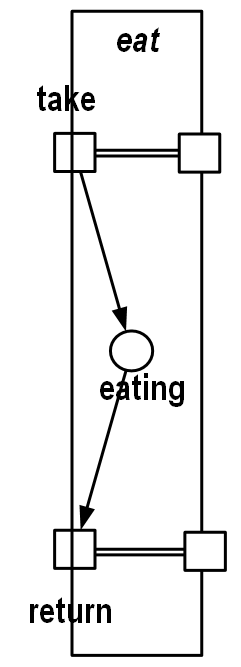}}
\caption{four tiny snippets of Petri nets}
\label{fig:four_snippets}
\end{figure}

In analogy to Fig.~\ref{fig:four_snippets_a}, the \textit{right use} module of four snippets in Fig.~\ref{fig:four_snippets_b} shows the case of a fork being taken and returned by its right user. The interfaces of this module mirror the interfaces of Fig.~\ref{fig:four_snippets_a}.

The module think in Fig.~\ref{fig:four_snippets_c} shows the behavior of a thinking philosopher: A thinking philosopher takes a fork, and returns it later on. This module resembles structurally the modules in Figs.~\ref{fig:four_snippets_a} an \ref{fig:four_snippets_b}. However, its elements are differently placed: Firstly, the place \textit{thinking} is located in the inner of the module. Secondly, for reasons to become clear later on, we wish the transition \textit{take} to be a member of both, the left and the right interface of the module \textit{think}.  This is conceptually simple, but graphically this is challenging. As a graphical convention, we include an additional graphical instance of the \textit{take} transition, placed on the left margin of \textit{think}, as in Fig.~\ref{fig:four_snippets_c}. A double line, resembling equality, indicates that both squares represent just one transition. Likewise, the transition \textit{return} belongs to the left as well as the right interface of \textit{think}. Finally Fig.~\ref{fig:four_snippets_d} shows the \textit{eat} module.

\subsection{The Forks-based Model\label{sec:3.2}}
The four modules of Fig.~\ref{fig:four_snippets} appear not too impressive. What makes them interesting is their \textit{composition}: In the philosophers system, a fork can be utilized by its left as well as its right user; any other behavior is not intended. Fig.~\ref{fig:forks_a} shows this behavior as a module, called fork. This module is gained as the composition $\textit{fork} \coloneqq \textit{left use} \bullet \textit{right use}$ of the modules \textit{left use} and \textit{right use} of Figs.~\ref{fig:four_snippets_a} and \ref{fig:four_snippets_b}.

\begin{figure}[!tb]
\centering
\subcaptionbox{$\textit{fork} \coloneqq
\textit{left use} \bullet \textit{right use}$\label{fig:forks_a}}
{\includegraphics[trim={0cm 0cm 0cm 0cm},clip,scale=.18]{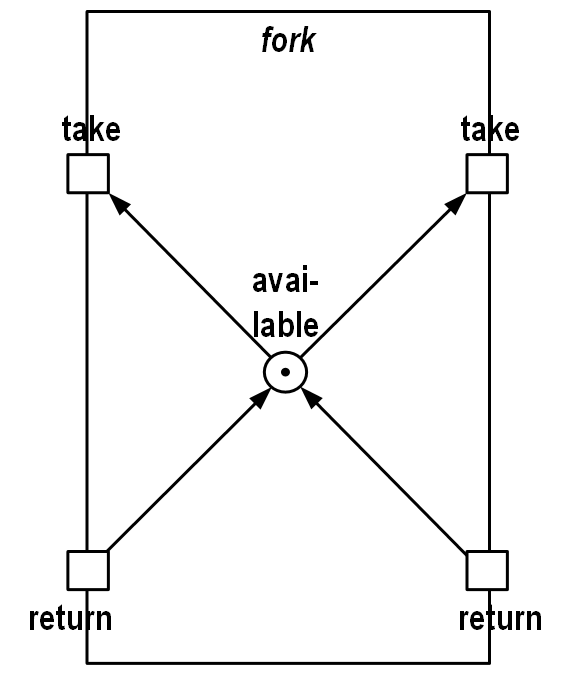}}\hspace{0.5cm}
\subcaptionbox{two representations of the module 
$\textit{fork with left user} \coloneqq \textit{think} \bullet \textit{fork}$\label{fig:forks_b}}
{\includegraphics[trim={0cm 0cm 0cm 0cm},clip,scale=.18]{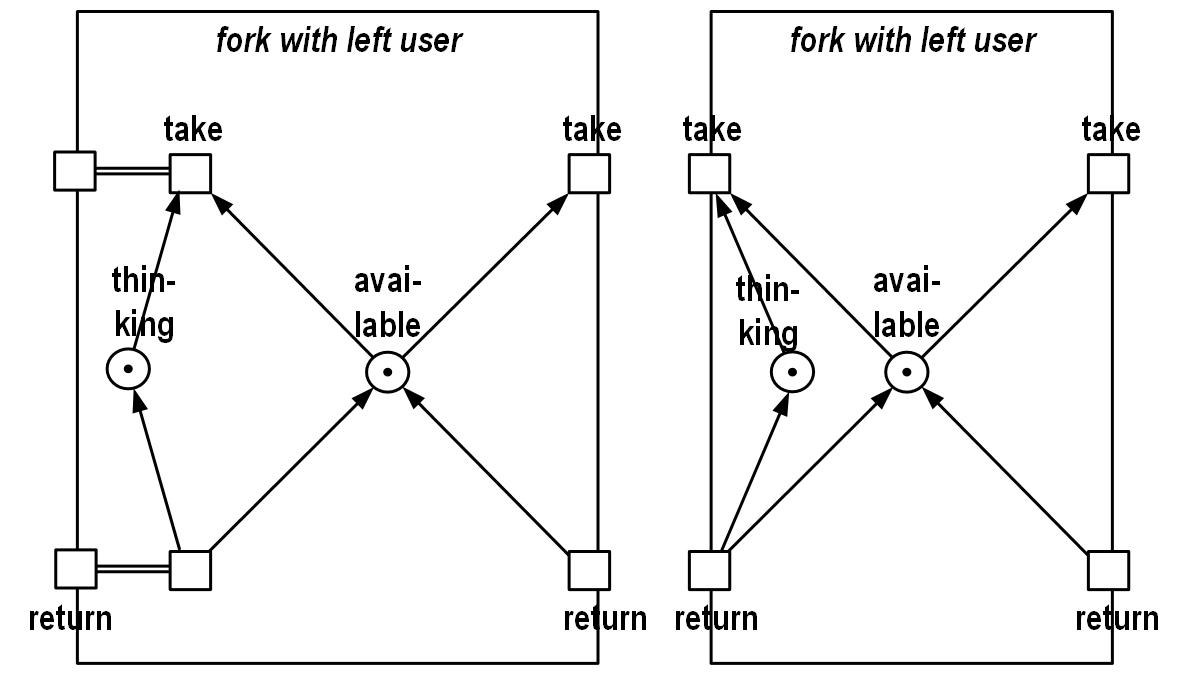}}
\caption{The modules of \textit{fork} and \textit{fork with left user}}
\label{fig:forks}
\end{figure}

\begin{figure}[!tb]
\centering
{\includegraphics[trim={0cm 0cm 0cm 0cm},clip,scale=0.25]{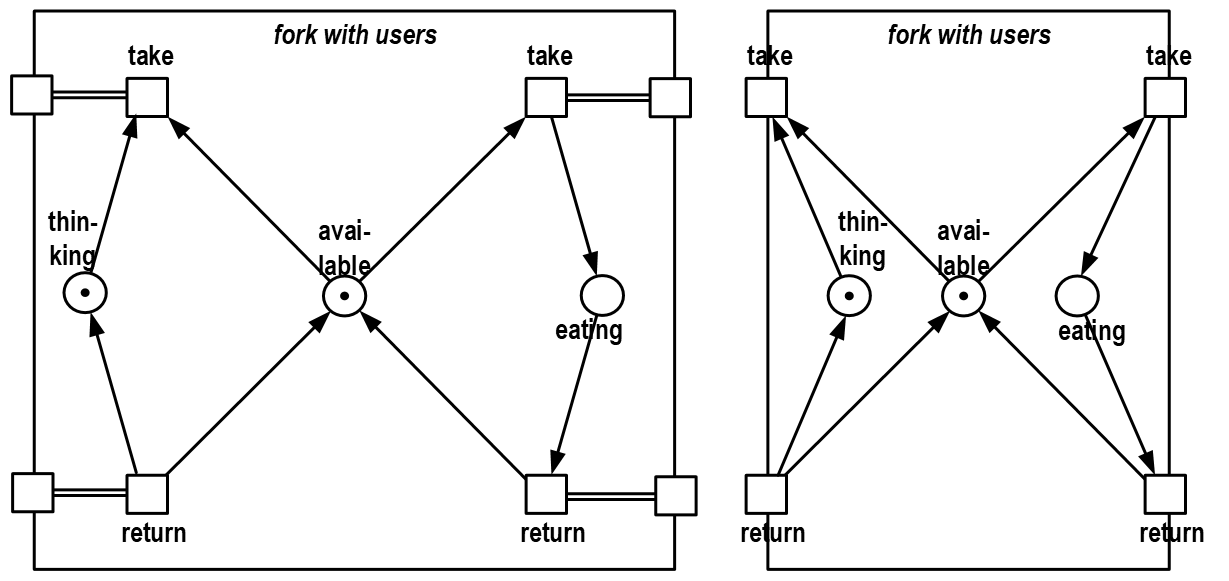}}
\caption{two representations of the module $\textit{fork with users} \coloneqq \textit{think} \bullet \textit{fork} \bullet \textit{eat}$}
\label{fig:forks_with_users}
\end{figure}

\begin{figure}[!tb]
\centering
\subcaptionbox{the module $\textit{forks in a row} \coloneqq \textit{fork with users} \bullet \textit{fork with users} \bullet \textit{fork with users} \bullet \textit{fork with users} \bullet \textit{fork with users}$\label{fig:forks_based_a}}
{\includegraphics[trim={0cm 0cm 0cm 0cm},clip,width=1.1\textwidth]{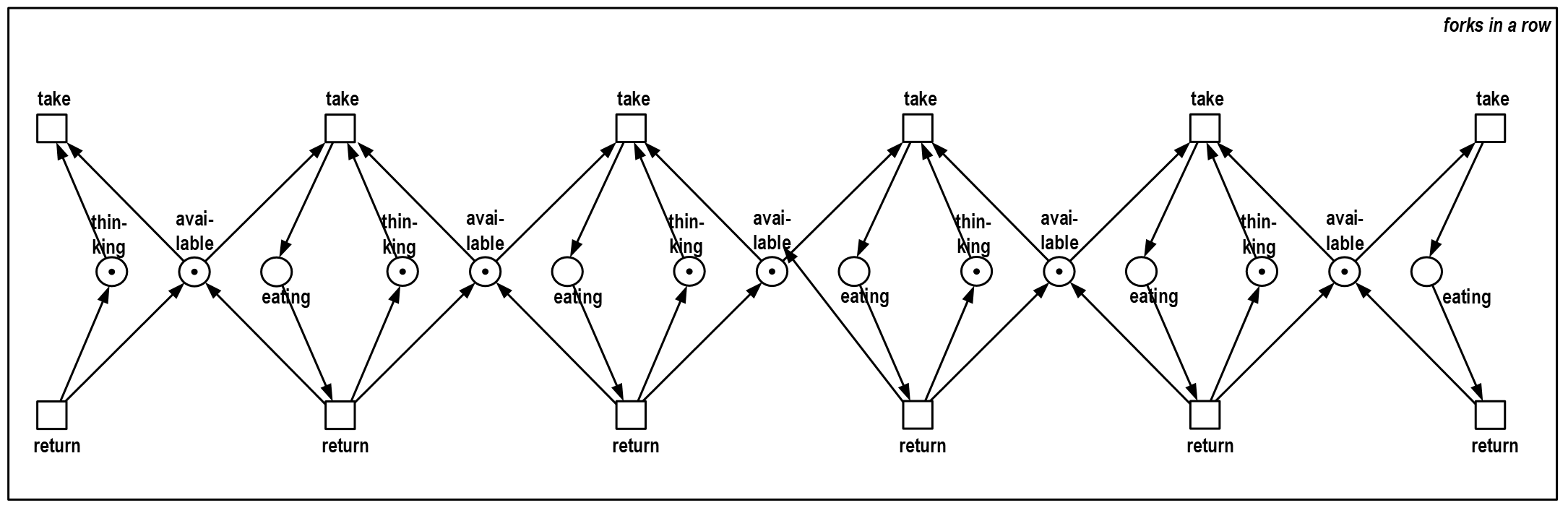}}\vspace{0.5cm}
\subcaptionbox{the module $\textit{forks in a circle} \coloneqq (\textit{forks in a row})^c$\label{fig:forks_based_b)}}
{\includegraphics[trim={0cm 0cm 0cm 0cm},clip,width=1.1\textwidth]{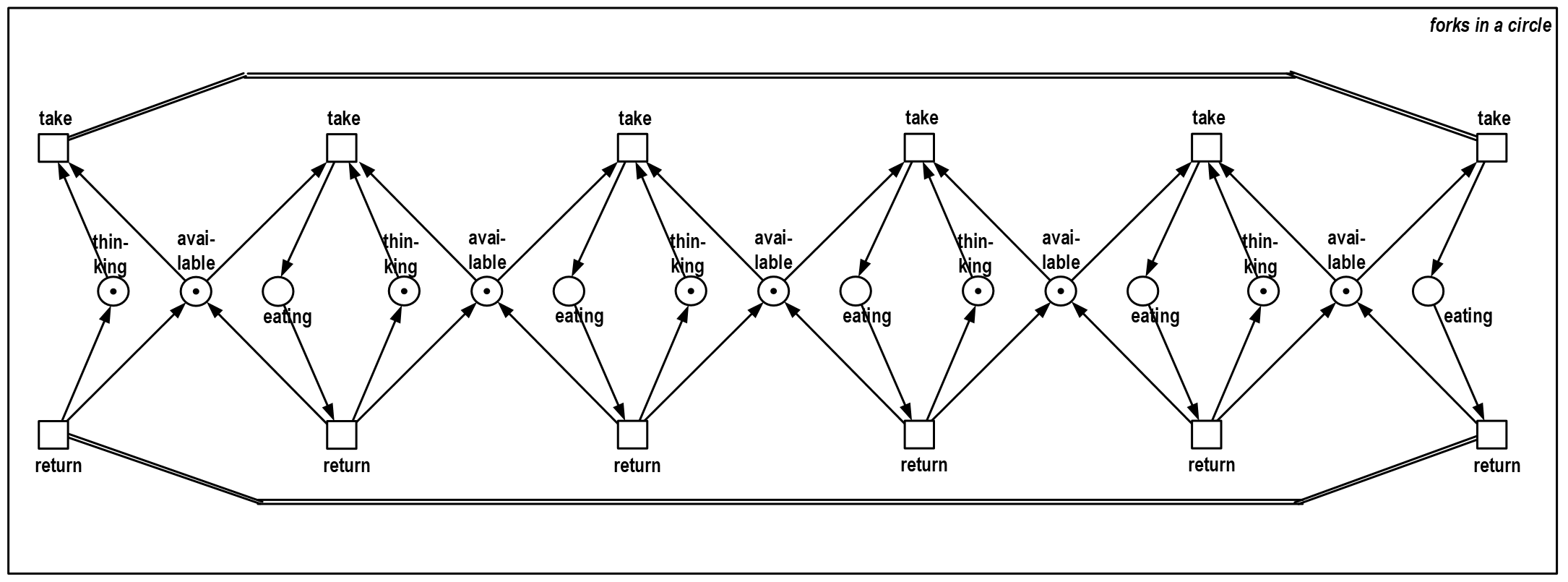}}
\caption{composing five forks}
\label{fig:forks_based}
\end{figure}

We continue by composing a thinking philosopher with the fork, as in Fig.~\ref{fig:forks_b}. Both graphical representations show the same module. Notice that we might have constructed $A \coloneqq \textit{think} \bullet \textit{left use}$ as an intermediate module, and then construct $A \bullet \textit{right use}$, resulting again in the module $\textit{think} \bullet \textit{fork}$ of Fig.~\ref{fig:forks_b}.

We further extend Fig.~\ref{fig:forks_b} by the right user, yielding the module \textit{fork with users} of Fig.~\ref{fig:forks_with_users}. We can now compose five copies of this module, thus gaining the module \textit{forks in a row} of Fig.~\ref{fig:forks_based_a}.  

Dijkstra’s philosophers do not sit in an row, but in a circle. To model this \textit{take} labeled transitions of the left and the right interface of the module \textit{forks in a row} of Fig.~\ref{fig:forks_based_a} must be merged. The \textit{return} labeled interface transitions must be merged likewise. This is easily achieved in the graphical representation: As in Figs.~\ref{fig:forks} and \ref{fig:forks_with_users}, a double line links different representations of one and the same element. So, the module \textit{forks in a cycle} of Fig.~\ref{fig:forks_based_b)} shows a perfect model of Dijkstra’s five philosophers. This module’s right and left interfaces both are empty. 

The step from Fig.~\ref{fig:forks_based_a} to Fig.~\ref{fig:forks_based_b)} cannot be represented by means of the composition operator! Instead, we need the closure operator, as introduced in Sec.~\ref{sec:2.3}.

Summing up, based on the four snippets of Fig.~\ref{fig:four_snippets}, we can define the philosophers’ system by purely algebraic constructs, using the module for forks:

\begin{equation}
\begin{split}
fork                       \coloneqq  & \; \textit{left use} \bullet \textit{right use} \\
\textit{fork with users} \coloneqq  & \; \textit{think} \bullet \textit{fork} \bullet \textit{eat} \\
\textit{forks in a row}   \coloneqq & \; \textit{fork with users} \bullet \textit{fork with users} \bullet \textit{fork with users} \; \bullet \\ & \;  \textit{fork with users} \bullet \textit{fork with users} \\
\textit{forks in a cycle}  \coloneqq & \; (\textit{fork in a row})^c
\end{split}
\end{equation}

Usual specifications of the five philosopher system employ forks or phils, numbered (or indexed) $1, \dots, 5$, and use addition modulo 5 to merge the leftmost and the rightmost philosopher – not too elegant a construction. The \textit{fork in a cycle} definition specifies a “perfect” cycle, without distinguishing philosophers individually.

\subsection{The Philosophers-based Model\label{sec:3.3}}

In Sec.~\ref{sec:3.2}, we modelled the five philosophers system from the perspective of forks: The module \textit{fork} of Fig.~\ref{fig:forks_a} has been complemented by the left and right user. Five instances of this extended module have then been composed to the complete model in Fig.~\ref{fig:forks_based_b)}.  

One can derive this model also from the perspective of the philosophers. Starting with the Petri net snippets of Fig.~\ref{fig:four_snippets}, the module \textit{phils} of Fig.~\ref{fig:phils_a}, composed from the snippets \textit{think} and \textit{eat}, models a single philosopher. Notice that the transition \textit{take} as well as the transition \textit{return} both belong to both, the left and the right interface. For the sake of symmetry and elegance, these transitions are also represented inside the module. Fig.~\ref{fig:phils_b} expands this module by the philosopher’s use of his left and right fork. In analogy to Fig.~\ref{fig:forks_based_a}, five instances of the philosopher module are composed in a row in Fig.~\ref{fig:phils_based_a}, and bent to a cycle in Fig.~\ref{fig:phils_based_b}. In fact, the two modules of Fig.~\ref{fig:forks_based_b)} and Fig.~\ref{fig:phils_based_b} are identical.

\begin{figure}[!tb]
\centering
\subcaptionbox{$phil \coloneqq \textit{think} \bullet \textit{eat}$\label{fig:phils_a}}
{\includegraphics[trim={0cm 0cm 0cm 0cm},clip,scale=.20]{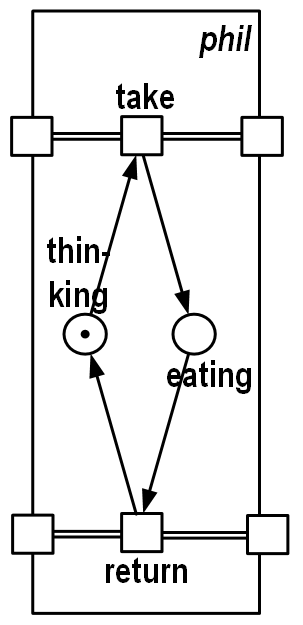}}\hspace{.5cm}
\subcaptionbox{two representations of the module
$\textit{phil with forks} \coloneqq \textit{left use} \bullet \textit{phil} \bullet \textit{right use}$\label{fig:phils_b}}
{\includegraphics[trim={0cm 0cm 0cm 0cm},clip,scale=.20]{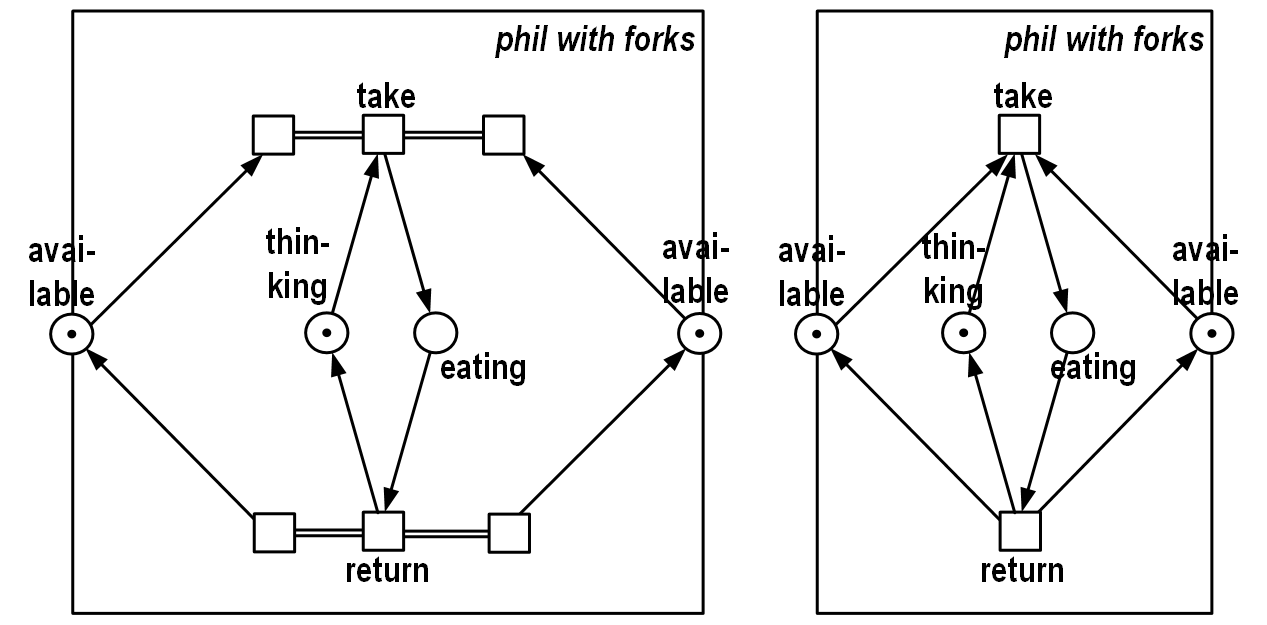}}
\caption{The modules of \textit{phil} and \textit{phil with forks}}
\label{fig:phils}
\end{figure}

\begin{figure}[!tb]
\centering
\subcaptionbox{the module $\textit{phils in a row} \coloneqq \textit{phil with forks} \bullet \textit{phil with forks} \bullet \textit{phil with forks} \bullet \textit{phil with forks} \bullet \textit{phil with forks}$\label{fig:phils_based_a}}
{\includegraphics[trim={0cm 0cm 0cm 0cm},clip,width=1\textwidth]{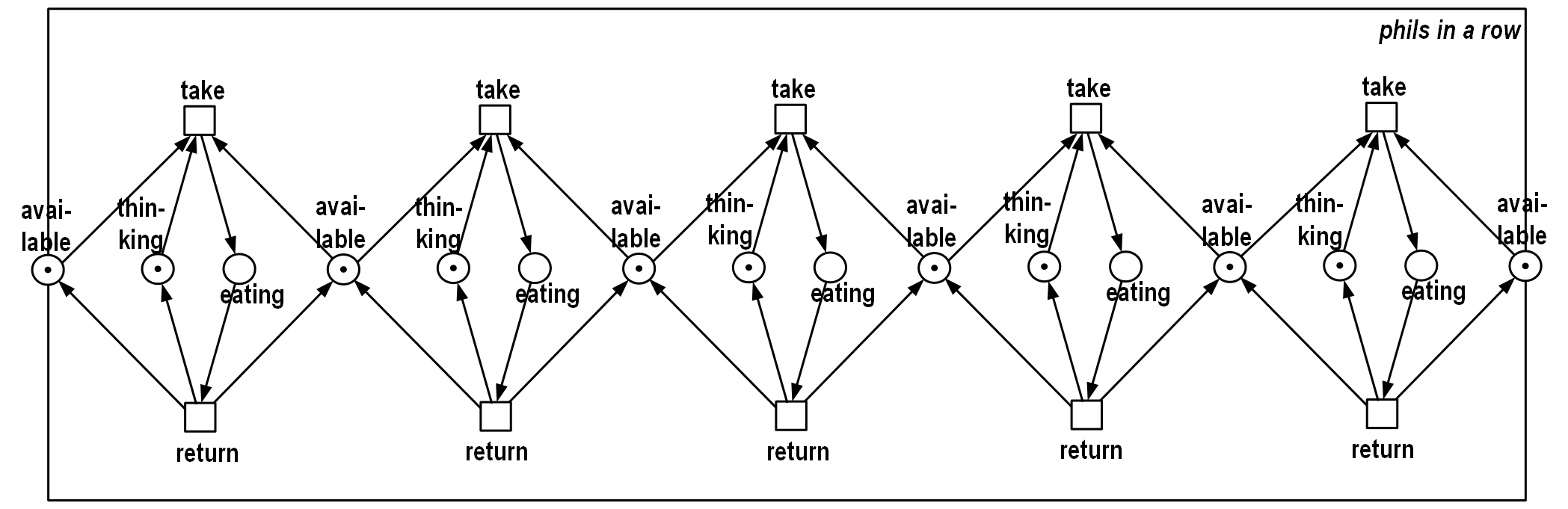}}\vspace{0.5cm}
\subcaptionbox{the module
$\textit{phils in a circle} \coloneqq (\textit{phils in a row}) ^\ast$\label{fig:phils_based_b}}
{\includegraphics[trim={0cm 0cm 0cm 0cm},clip,width=1\textwidth]{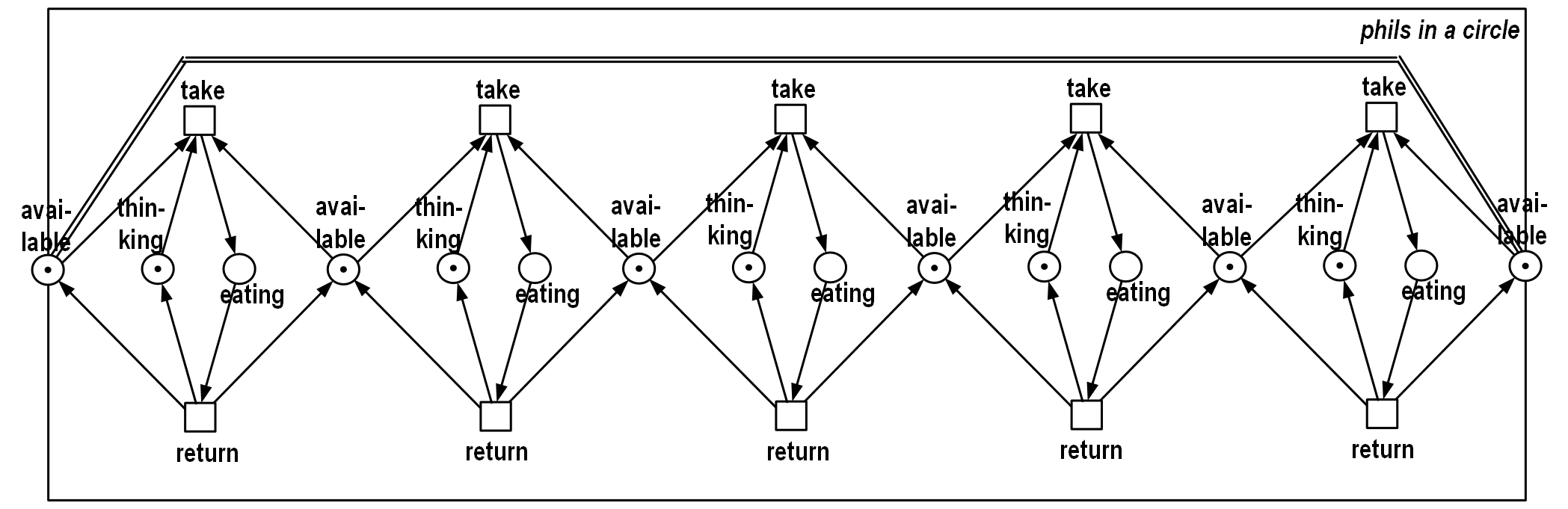}}
\caption{composing five phils}
\label{fig:phils_based}
\end{figure}

Again, based on the four snippets of Fig.~\ref{fig:four_snippets}, we can define the philosophers’ system by purely algebraic constructs, using the module for philosophers:

\begin{equation}
\begin{split}
\textit{phil} \coloneqq & \; \textit{think} \bullet \textit{eat} \\
\textit{phil with forks} \coloneqq  & \; \textit{left use} \bullet \textit{phil} \bullet \textit{right use} \\
\textit{phils in a row} \coloneqq  & \; \textit{phil with forks} \bullet \textit{phil with forks} \bullet \textit{phil with forks} \; \bullet \\ & \; \textit{phil with forks} \bullet \textit{phil with forks}  \\ 
\textit{phils in a cycle} \coloneqq  \; & (\textit{phils in a row})^c \\
\end{split}
\end{equation}

This definition of \textit{phils in a cycle} and \textit{forks in a cycle} from Sec.~\ref{sec:3.2} yield identical nets.

\subsection{Abstraction\label{sec:3.4}}
Composition yields “seamless” results: In general, it is not possible to re-compute modules $A$ and $B$ from $A \bullet B$. Nevertheless, occasionally one is interested in the history of a design process. For instance, one may ask whether the philosophers’ model has been generated via forks or via phils. This kind of information can be retained by means of \textit{abstractions}. Abstract representations of modules on any level can be composed and again be abstracted. Using the shorthand representations of Fig.~\ref{fig:abstract_versions}, Fig.~\ref{fig:abstract_forks} shows abstractions of the four behavioral snippets of Fig.~\ref{fig:four_snippets}, as well as abstract versions of forks and forks with users, etc. Fig.~\ref{fig:abstract_phils} shows corresponding abstractions of the phils-based path to the overall system. Concrete and abstract representations on any level of detail can be composed, as Fig.~\ref{fig:mixed_version}.

\begin{figure}[!tb]
\centering
{\includegraphics[trim={0cm 0cm 0cm 0cm},clip,width=1\textwidth]{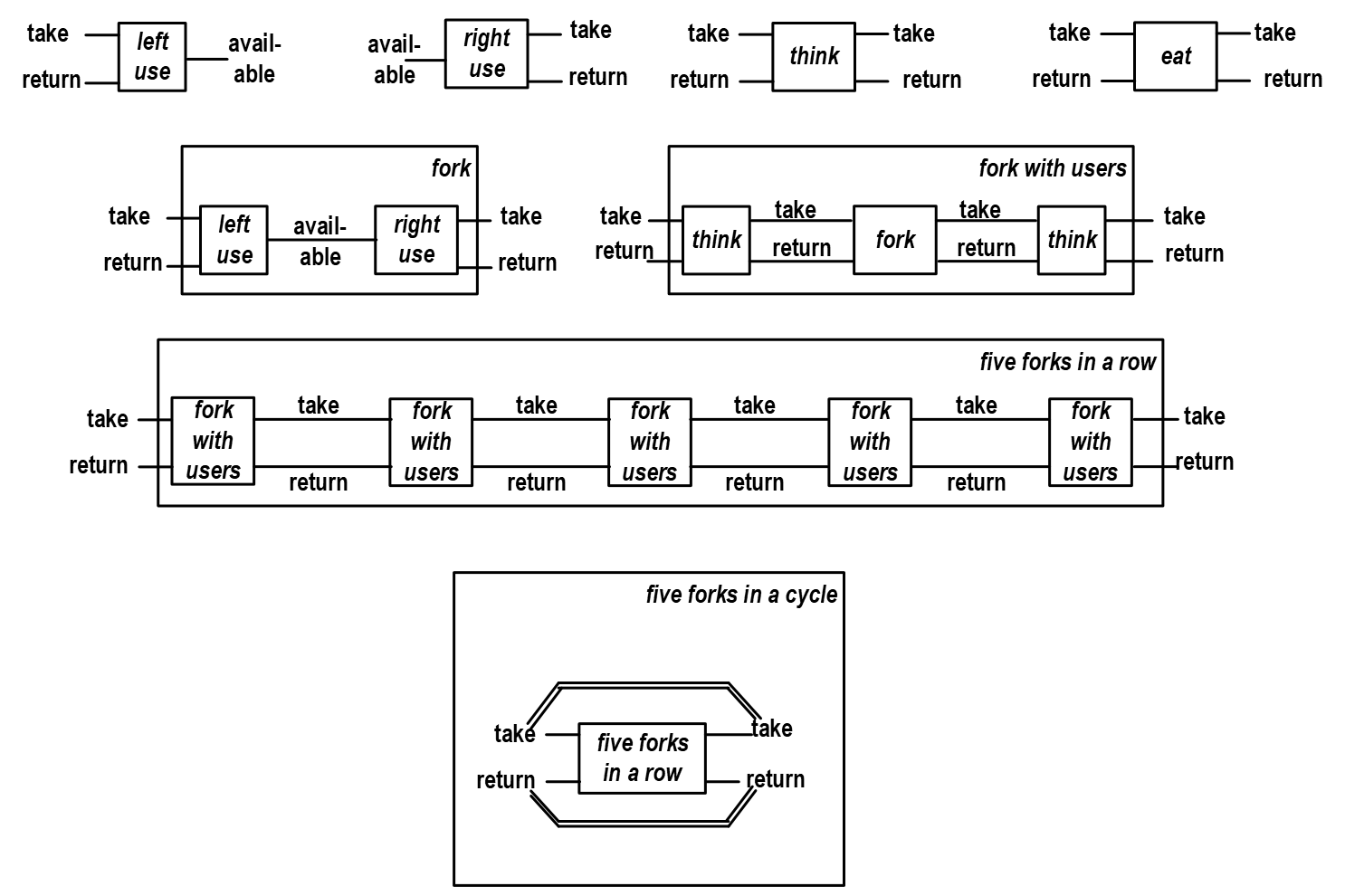}}
\caption{abstract versions of the \textit{forks}-based modules}
\label{fig:abstract_forks}
\end{figure}

\begin{figure}[!tb]
\centering
{\includegraphics[trim={0cm 0cm 0cm 0cm},clip,width=1\textwidth]{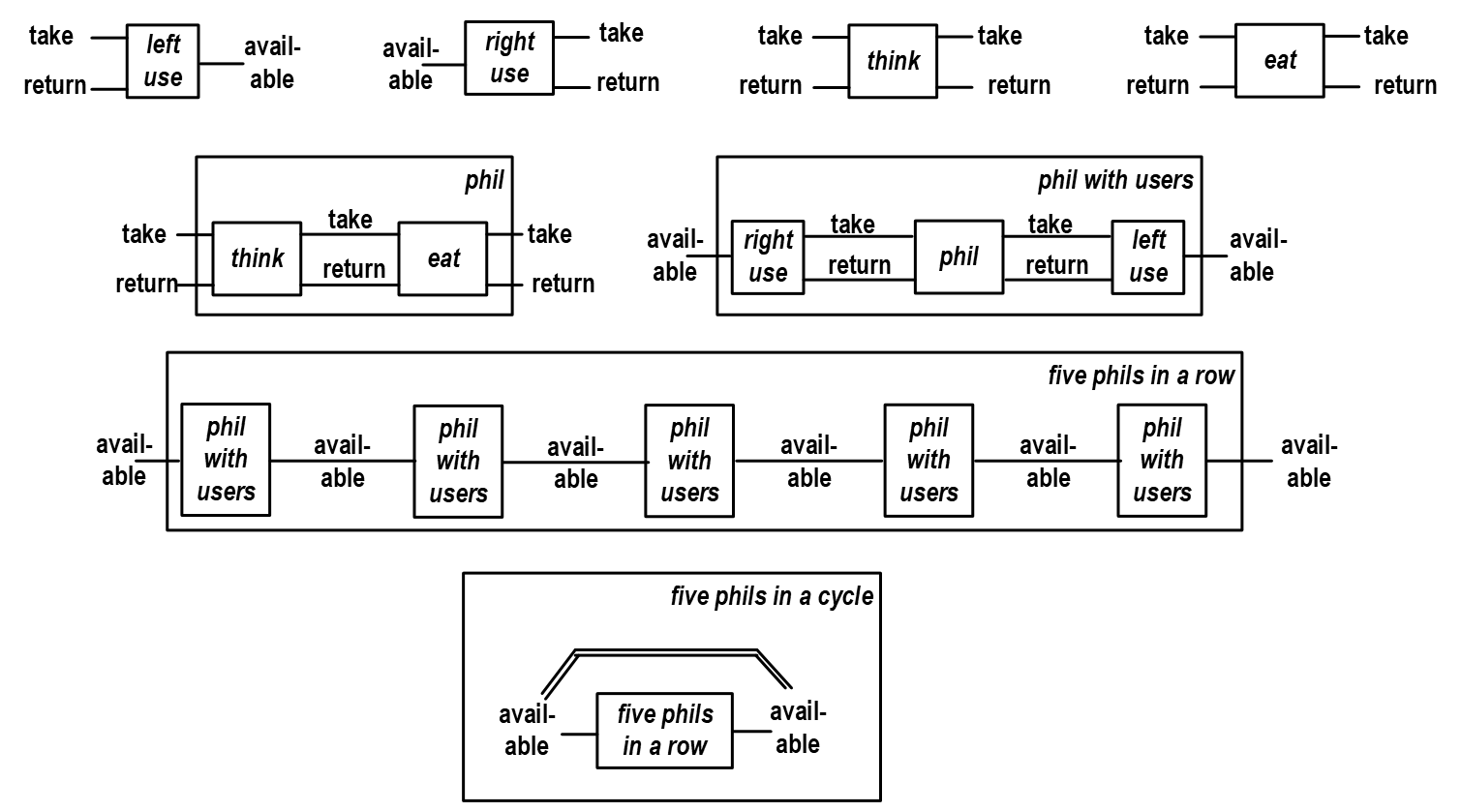}}
\caption{abstract versions of the \textit{phils}-based modules}
\label{fig:abstract_phils}
\end{figure}

\begin{figure}[!tb]
\centering
{\includegraphics[trim={0cm 0cm 0cm 0cm},clip,width=1\textwidth]{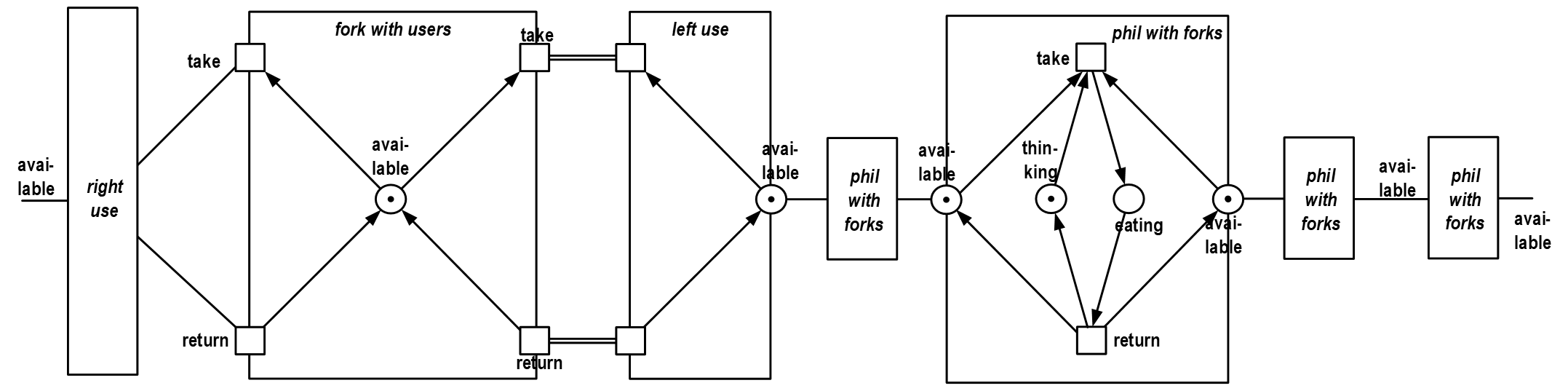}}
\caption{a mixed representation of the philosophers' system}
\label{fig:mixed_version}
\end{figure}

\section{Case Study: A Production Line\label{sec:production_line}}

In the philosophers example, composition of modules is very simple: For all considered compositions $A \bullet B$ of modules $A$ and $B$, the labels of the elements of $A^\ast$ and of $^\ast B$ correspond bijectively, i.e. each element of $A ^\ast$ has an harmonic partner in $^\ast B$, and each element of $^\ast B$ has an harmonic partner in $A ^\ast$. This is not always the case. In fact, modules $A$ and $B$ with \textit{any} labelings of $A^\ast$ and $^\ast B$ can be composed. Hence, for an element of $A ^\ast$ with label $\alpha$ there may be no, or many elements of $^\ast B$ with label $\alpha$. Vice versa, for an element of $^\ast B$ with label $\beta$ there may be no or many element of $A ^\ast$ with label $\beta$. 

To keep matters simple, one may suggest to stick to the case of modules where in each interface, no two elements of an interface are labeled alike.  This, however, is not feasible: composition of two such modules may destroy this property. 

Fig.~\ref{fig:production_and_packing_a} shows a simple production schema: A machine is fed a bunch of material, produces a product, and is ready again for the next bunch of material. Fig.~\ref{fig:production_and_packing_b} extends Fig.~\ref{fig:production_and_packing_a} by a second production step. Of course, this requires a second bunch of material, and produces a second product. 

This example shows that an interface may very well contain equally labeled elements. As we require composition to be a total operation (i.e., any two modules of a given set $M$ of modules can be composed, yielding again a module in $M$), this case cannot be excluded.  

According to the definition of composition, equally labeled elements of an interface are numbered top down. Composition $A \bullet B$ of two modules with equally labeled elements of $A ^\ast$ and $^\ast B$ yields harmonic pairs $\{a, b\}$ of elements $a \in A ^\ast$ and $b \in {^\ast B}$, that turn into inner elements of $A \bullet B$. Remaining elements of $A ^\ast$ and $^\ast B$ go to $(A \bullet B)^\ast$ and $^\ast(A \bullet B)$, respectively. 

The module pack in Fig.~\ref{fig:packing(a)} wraps a product into a parcel.  Fig.~\ref{fig:packing(b)} shows two packings in one module. Fig.~\ref{fig:production_and_packing} shows several combinations of production and packing.

\begin{figure}[!tb]
\centering
\subcaptionbox{module $production$\label{fig:production_line_a}}
{\includegraphics[trim={0cm 0cm 0cm 0cm},clip,scale=0.23]{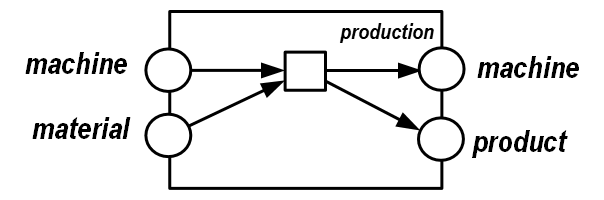}}
\subcaptionbox{module $production \bullet production$\label{fig:production_line_b}}
{\includegraphics[trim={0cm 0cm 0cm 0cm},clip,scale=0.23]{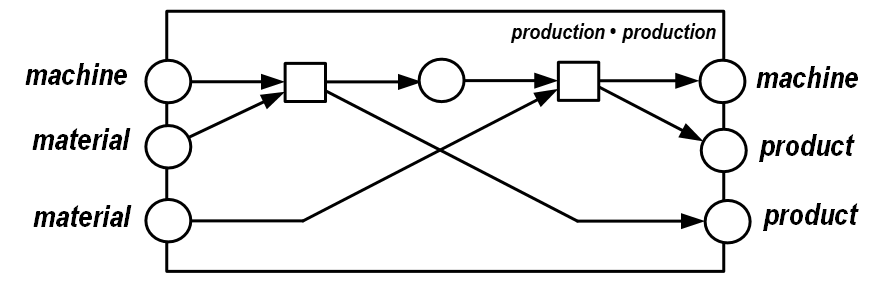}}
\caption{production line}
\label{fig:production_line}
\end{figure}

\begin{figure}[!tb]
\centering
\subcaptionbox{module $pack$\label{fig:packing(a)}}
{\includegraphics[trim={0cm 0cm 0cm 0cm},clip,scale=0.23]{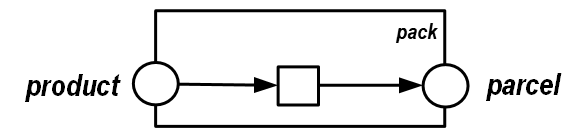}}
\subcaptionbox{module $pack \bullet pack$\label{fig:packing(b)}}
{\includegraphics[trim={0cm 0cm 0cm 0cm},clip,scale=0.23]{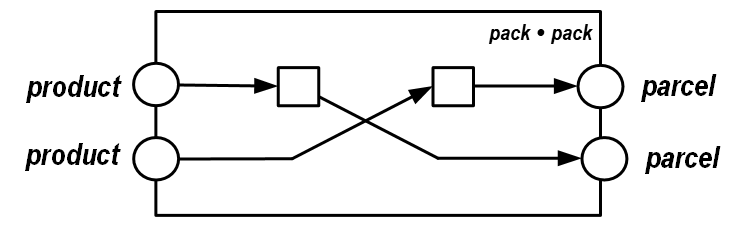}}
\caption{the module $pack$}
\label{fig:packing}
\end{figure}

\begin{figure}[!tb]
\centering
\subcaptionbox{module $production$ and $pack$\label{fig:production_and_packing_a}}
{\includegraphics[trim={0cm 0cm 0cm 0cm},clip,scale=.23]{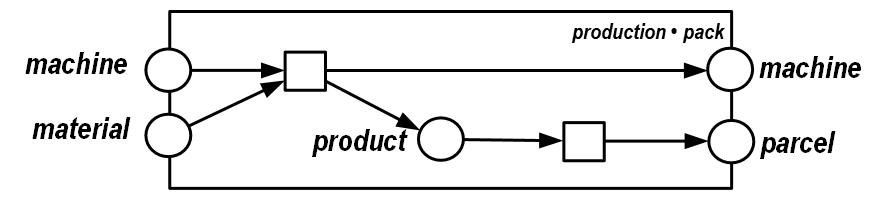}}\vspace{0.5cm}
\subcaptionbox{module $production \bullet pack \bullet production \bullet pack \bullet production$\label{fig:production_and_packing_b}}
{\includegraphics[trim={0cm 0cm 0cm 0cm},clip,scale=.23]{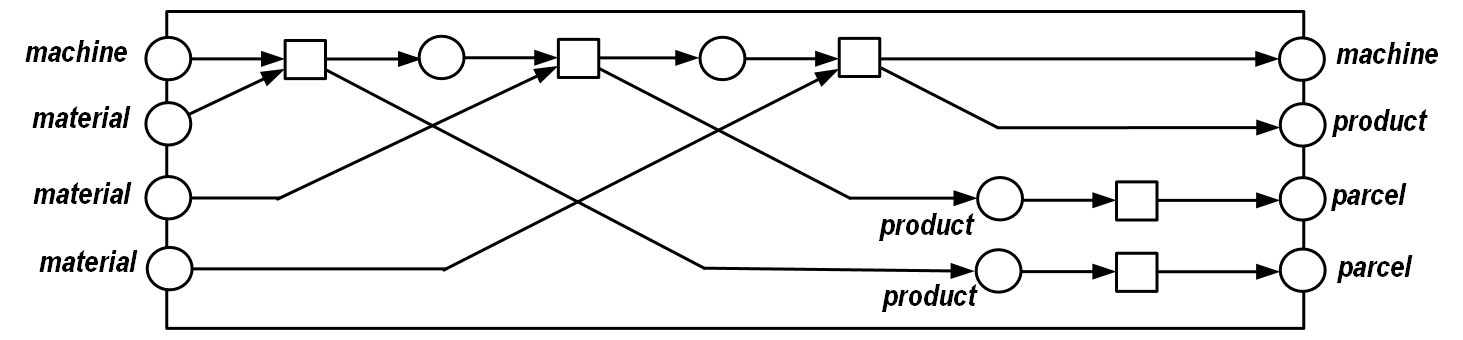}}\vspace{0.5cm}
\subcaptionbox{module $production \bullet production \bullet production \bullet pack \bullet pack$\label{fig:production_and_packing_c}}
{\includegraphics[trim={0cm 0cm 0cm 0cm},clip,scale=.23]{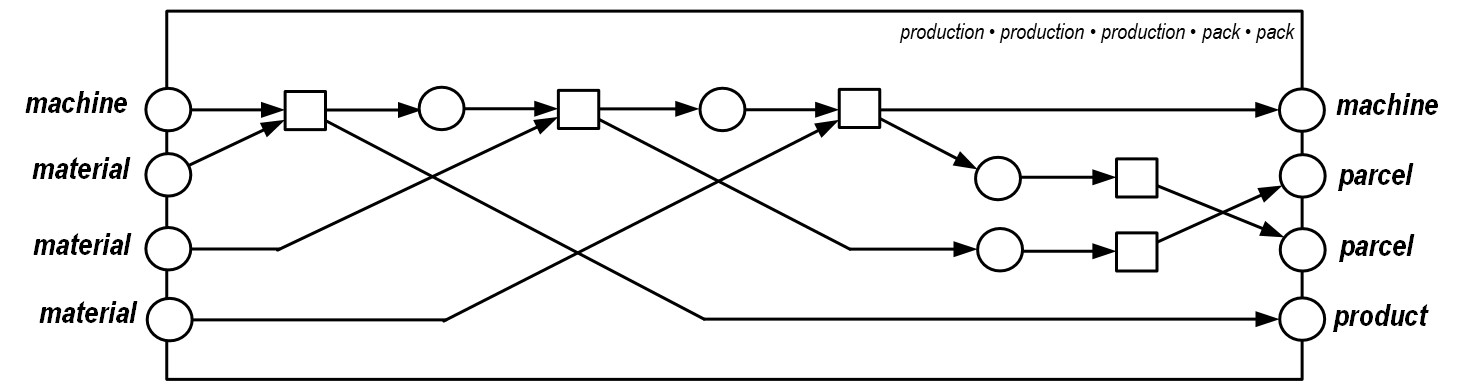}}
\caption{producing three products, and packing two of them}
\label{fig:production_and_packing}
\end{figure}

\section{Related Work\label{sec:related_work}}

The idea of decomposing computer-based systems into submodules is around for a number of decades \cite{parnas1972decomposing}. Numerous proposals and an extensive discussion on the composition of (business) systems in general can be found in literature, e.g. \cite{broy1997refinement,fettke2003specification}. More specific, composition and hierarchies of Petri nets are important and widely considered topics, fundamental for the construction of big net models, and considered since the early 1980ies, e.g. in \cite{suzuki1983method_refinement}. Typical later contributions include the box calculus \cite{best2001pnalgebra} and high-level, tool-supported nets, in particular colored nets \cite{jensen2009colourednets}. Many published concepts for the composition and hierarchical structuring of nets come with very specific properties for specific application areas, e.g. models of service composition.

\section{Conclusions\label{sec:conclusions}}
 
Our proposal just provides a general framework for composition and refinement of nets, applicable to any kind of nets. We suggest an algebraic representation, comparable to the box calculus, complementing the graphical presentation. This is particularly useful for big models, where occasionally a small part may be of interest in detail, but the overall picture should be visible, too. 
It is important that composition forms a monoid on nets, i.e. that any two nets can be composed, yielding again a net, and that composition is associative. This is inevitable in case more than two modules are composed, as the philosophers’ model shows. We are not aware of any composition operator that is total as well as associative. 

The above principles for composition and abstraction are a pillar of the modeling infrastructure \textsc{Heraklit}, where further case studies show usability of these principles \cite{fettke2021modelling,fettke2021handbook}. \textsc{Heraklit} paves the way and provides the means for conducting more field-oriented research on understanding a computer-based system and its decomposition.

%
%
%

\bibliographystyle{splncs04}
\bibliography{main}





\end{document}